\def\m{\mathcal}
\def\dtv{d_{\mathsf{TV}}}
\def\Bern{\mathsf{Bern}}
\def\unif{\mathsf{Unif}}
\def\m{\mathcal}
\DeclareMathOperator{\Prob}{\mathbf{P}}
\DeclareMathOperator{\mix}{\mathsf{T_{hit}}}
\DeclareMathOperator{\E}{\mathbb{E}}
\DeclareMathOperator{\Pe}{\mathsf{P_e}}
\DeclareMathOperator{\col}{\text{C}_{oll}}
\DeclareMathOperator{\isit}{\mathsf{RUNS}}
\newcommand{\veps}{\varepsilon}
\title[On The Memory Complexity of Uniformity Testing]{On The Memory Complexity of Uniformity Testing}
\begin{document}

\maketitle

\begin{abstract}%
   In this paper we consider the problem of uniformity testing with limited memory. We observe a sequence of independent identically distributed random variables drawn from a distribution $p$ over $[n]$, which is either uniform or is $\veps$-far from uniform under the total variation distance, and our goal is to determine the correct  hypothesis. At each time point we are allowed to update the state of a finite-memory machine with $S$ states, where each state of the machine is assigned one of the hypotheses, and we are interested in obtaining an asymptotic probability of error at most $0<\delta<1/2$ uniformly under both hypotheses. 
   The main contribution of this paper is deriving upper and lower bounds on the number of states $S$ needed in order to achieve a constant error probability $\delta$, as a function of $n$ and $\veps$, where our upper bound is $O(\frac{n\log n}{\veps})$ and our lower bound is $\Omega (n+\frac{1}{\veps})$. Prior works in the field have almost exclusively used  collision counting for upper bounds, and the Paninski mixture for lower bounds. Somewhat surprisingly, in the limited memory with unlimited samples setup, the optimal solution does not involve counting collisions, and the Paninski prior is not hard. Thus, different proof techniques are needed in order to attain our bounds.           
\end{abstract}

\begin{keywords}%
  Memory complexity, uniformity testing\end{keywords}

\section{Introduction}
During the past couple of decades, distribution testing has developed into a mature field. One of the most fundamental tasks in this field is the problem of \textit{uniformity testing}. In this problem, the objective is to design a sample-optimal algorithm that, given i.i.d samples from a discrete distribution $p$ over $[n]$ and some  parameter $\veps>0$, can distinguish (with probability $0<\delta<1/2$) between the case where $p$ is uniform and the case where $p$ is $\veps$-far from uniform in total variation. The common benchmark for characterizing the statistical hardness of uniformity testing is the \textit{minimax sample complexity}, namely the number of independent samples from the distribution one needs to see in order to guarantee that the expected $0-1$ loss is at most $\delta$. However, as the amount of available data is constantly increasing, collecting enough samples for accurate estimation is becoming less of a problem, and it is the dedicated computational resources that constitute the main bottleneck on the performance that a testing algorithm can attain. 

As a result, the topic of testing under computational constraints is currently drawing considerable attention, and in particular, the problem of testing under memory constraints has been recently studied in various different setups, as we elaborate in Section~\ref{subsec:related_work}. In order to single out the effects limited memory has on our algorithms, we let the number of samples we process be arbitrarily large. Perhaps surprisingly, the bounded memory problem does not become trivial in the unbounded sample regime. Moreover, the optimal solution to the problem without memory constraints does not translate to the memory constrained case, as we shall later see. 

Let us formally define the problem of interest: $X_1,X_2,\ldots$ is a sequence of independent identically distributed random variables drawn from a distribution $p$ over $[n]$, which is either uniform or is $\veps$-far from uniform under the total variation distance. Let $\m{H}_0$ denote the hypothesis $p= u$, and $\m{H}_1$ the composite hypothesis $p \in \Theta$, where $\Theta \triangleq \{q:\dtv(q,u) > \veps\}$, where $u$ is the uniform distribution over $[n]$. An \textit{$S$-state uniformity tester} consists of two functions: $f$, and $d$, where $f:[S] \times [n] \rightarrow [S]$ is a state transition (or memory update) function, and
$d:[S]\rightarrow \{\m{H}_0,\m{H}_1\}$ is a decision function. Letting $M_t$ denote the state of the memory at time $t$, this finite-state machine evolves according to the rule 
\begin{align}
M_0&=s_{\text{init}},\label{eq:init} \\M_t&=f(M_{t-1},X_t)\in [S],\label{eq:evolution}
\end{align}
for some predetermined initial state $s_{\text{init}} \in [S]$. If the machine is stopped at time $t$, it outputs the decision $d(M_t)$. Note that here the memory update function $f$ is not allowed to depend on time. The restriction to time-invariant algorithms is operationally appealing, since storing the time index necessarily incurs a memory cost. Furthermore, since the number of samples is unbounded, simply storing the code generating a time-varying algorithm may require unbounded memory. We define the asymptotic expected $0-1$ loss under each hypothesis (also known as the type-$1$ and type-$2$ probabilities of error) as
	\begin{align}
	    \Pe (f,d|\m{H}_0) &= \lim_{t\rightarrow\infty} \frac{1}{t}\sum_{i=1}^t \Pr (d(M_i)=1|X_1,X_2,\ldots \overset{iid}{\sim}  u), \\
	    \Pe (f,d|\m{H}_1) &= \underset{p \in \Theta}{\sup}\lim_{t\rightarrow\infty} \frac{1}{t}\sum_{i=1}^t \Pr (d(M_i)=0|X_1,X_2,\ldots \overset{iid}{\sim} p)\label{eq:pefd}. 
	\end{align}
We are interested in the \emph{minimax memory complexity} of the tester, $S^*(n,\veps)$, defined as the smallest integer $s$ for which there exist $(f,d)$ such that $\Pe (f,d|\m{H}_i)\leq \delta$, for any $i\in \{0,1\}$. In previous works dealing with memory limited testing/estimation problems, the memory complexity was taken to be either the number of bits or the number of memory words needed to solve the problem, up to multiplicative constants. In those cases, a large gap between, for example, $S=n$ and $S=n^2$ states, would correspond only to a factor of $2$ in the number of bits we need to store. These definitions of memory complexity are natural in the large memory regime. However, in this work, we consider the small memory regime, so the definition in terms of the number of states seems more appropriate to us. In such regime, one might not be able to increase the number of available memory bits even by a constant factor. In other words, we are interested in deriving additive constant bounds on the number of bits stored, rather than constant factor bounds.


The sample complexity of uniformity testing is well known, as a result of the works of~\cite{goldreich2000testing} and~\cite{paninski2008coincidence}, and is $\Theta(\sqrt{n}/\veps ^2)$. The lower bound is established using $\chi^2$ divergences, where the input is drawn from the Paninski mixture. The upper bound is based on collision estimators, since $\sqrt{n}$ samples will result in $1$ expected collision on average under the uniform distribution, and $1+\Omega(\veps^2)$ expected collisions on average under the alternative, and the variances are small enough to distinguish between the hypotheses. Collision estimators have henceforth become the default algorithmic tool whenever uniformity testers are involved, and most upper bounds in the field rely on some variant of the collision estimator. In a similar fashion, the Paninski prior has become a recurring ingredient in lower bounds for the problem, since it is difficult to identify with limited samples. As it turns out, when memory is the more stringent restriction, these approaches are lacking. This can be observed in a recent work of~\cite{diakonikolas2019communication} that is concerned with the sample complexity under memory constraints of uniformity testing, and in which the memory $S>>n^{1/\veps^2}$ considered by the authors is substantially greater than what our results imply, even though the authors work with the more lenient streaming algorithms, that allow for time-varying functions. The increased required memory in~\cite{diakonikolas2019communication} mainly stems from the fact that the algorithm considered there is based on collision estimation, which is more wasteful in general. On the other hand, the sample complexity attained by that algorithm is much smaller than the number of samples the algorithm proposed here requires, but this reduced sample size comes at the price of increased memory (we discuss this in Section~\ref{sec:upper}).

Our main results are twofold. The first is showing that reduction of the uniformity testing problem to sequential binary hypothesis testing results in a substantial saving in memory, with respect to collision estimation. The performance of a scheme based on this idea is analyzed, and gives rise to an upper bound on the memory complexity (Theorem~\ref{thm:upper_bound}).  The second result consists of two lower bounds on the memory complexity. For the first lower bound, we use a member of the Paninski prior family to show that $S^*(n,\veps)=\Omega(1/\veps)$. However, even for the full Paninski mixture, we show that  $O(1/\veps)$ states are sufficient for reliably testing against the uniform distribution, when the number of available samples is large. Consequently, we must choose a different approach for our $\Omega(n)$ lower bound. Through analysis of stationary distributions of Markov processes, we show that for any uniformity tester with insufficient memory there is some distribution $q$ with $\dtv(q,u)>\veps$ which induces the same stationary distribution on its memory states as the uniform distribution, hence is indistinguishable from it (Theorem~\ref{thm:lower_bound}). This is in contrast to the standard approach of finding a suitable prior on $\Theta$ under which the Bayesian problem is hard.

\begin{theorem}\label{thm:upper_bound}
	\begin{align}
    S^*(n,\veps) = O \left(\frac{n \log n}{\veps}\right).
	\end{align}
\end{theorem}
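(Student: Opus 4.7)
The plan is to reduce uniformity testing to a sequence of $n$ binary hypothesis tests---one per element $i\in[n]$---orchestrated into a single finite-state machine that sweeps through indices in round-robin order. The starting observation is a pigeonhole argument: $\dtv(p,u)>\veps$ forces $\sum_i |p(i)-1/n|>2\veps$, so there exists at least one witness index $i^*\in[n]$ with $|p(i^*)-1/n|\ge 2\veps/n$, which the sweep is guaranteed to encounter. Under $\m{H}_0$ no index is a witness; under $\m{H}_1$ at least one is.

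For each fixed $i$, the per-element subtester observes the Bernoulli stream $Y_t=\ind\{X_t=i\}$ and maintains a walker on a finite lattice of $O(\log n/\veps)$ sites, implementing a finite-state sequential probability ratio test between the null $\Bern(1/n)$ and the alternative $\Bern((1\pm 2\veps)/n)$. The walker is driven by a suitably quantized log-likelihood-ratio update---a large upward jump on a hit ($Y_t=1$) and a small negative contribution on a miss ($Y_t=0$)---and has two absorbing boundaries: one labelled \emph{reject}, which makes the global machine output $\m{H}_1$, and one labelled \emph{accept}, which advances $i\to i+1$ and resets the walker. If the full cyclic sweep $i=1,\dots,n$ completes without a rejection, the machine outputs $\m{H}_0$. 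The global state is the pair (current index, walker position), of size $n\cdot O(\log n/\veps)=O(n\log n/\veps)$, matching the claim.

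Correctness follows by a union bound once each subtest is shown to achieve per-test error $O(\delta/n)$: under $\m{H}_0$ no subtest rejects with high probability, and under $\m{H}_1$ the subtest at the witness $i^*$ rejects with high probability. Because the sample sequence is unbounded and the walker has absorbing boundaries with strictly positive drift towards one of them under each hypothesis, the walker is guaranteed to exit in finite expected time, so the round-robin sweep is well-defined. A Wald-type boundary analysis then translates the per-test error requirement $\delta/n$ into a lattice length of order $\log(n/\delta)/\veps$, which is exactly the budget we have allotted.

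The main technical obstacle is realizing the per-element SPRT within only $O(\log n/\veps)$ lattice sites. The natural log-likelihood-ratio increments live on two very different scales, $\Theta(\veps)$ on a hit (occurring with probability $\sim 1/n$) and $\Theta(\veps/n)$ on a miss (occurring with probability $\sim 1$), and a naive quantization at the finer scale would inflate the per-test walker by an additional factor of $n$, producing $O(n^2\log n/\veps)$ total states. Avoiding this inflation requires offloading the slow negative drift into an auxiliary modular counter that is synchronized with---or absorbed into---the index-advancement mechanism, so that the bookkeeping cost of the small steps does not multiply into the walker's own state. Once this decoupling is performed, the drift-vs-variance analysis of the walker proceeds in a standard way, and the proof concludes by combining the per-test error guarantee with the union bound across the $n$ sequential subtests.
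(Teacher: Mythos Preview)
Your overall plan---reducing to $n$ per-element binary tests and sweeping through them---is sound, and you correctly locate the crux: each subtest must distinguish $\Bern(1/n)$ from $\Bern((1\pm 2\veps)/n)$ using only $O((\log n)/\veps)$ states, yet the two log-likelihood increment scales ($\Theta(\veps)$ on a hit, $\Theta(\veps/n)$ on a miss) make this impossible for a straightforwardly quantized deterministic walker. Your proposed fix, however, does not close the gap. An auxiliary modular counter of size $n$ that aggregates $n$ misses into one coarse decrement would resolve the scale mismatch, but it multiplies the per-test state by $n$; and it cannot be ``absorbed into'' the index-advancement mechanism, because the index $i$ is held \emph{fixed} throughout a subtest and so carries no spare bits in which to store a running miss-count. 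Without a concrete construction at this step, your argument delivers only $O(n^2(\log n)/\veps)$ states.

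The paper avoids the two-scale problem by a different reduction: rather than testing the marginal $p_i$ against $1/n$, it tests the \emph{conditional} probability $p_i^{(1,i)}=\Pr(X=i\mid X\in\{1,i\})$ against $\tfrac12$. The per-test machine simply ignores all samples except $1$ and $i$, stepping right on $i$ and left on $1$; both relevant events now have comparable probability and both increments are of scale $\Theta(\veps)$, so a $\isit$-type walker with $O((\log n)/\veps)$ states achieves per-test error $\delta/n$. The price is a different witness lemma (Lemma~\ref{lem:biased}): one must show that $\dtv(p,u)>\veps$ forces some $p_1^{(1,i)}$ or $p_i^{(1,i)}$ to exceed $\tfrac12+\Omega(\veps)$, which is not the same as (and does not follow from) your pigeonhole statement $|p(i^*)-1/n|\ge 2\veps/n$. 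One also needs a preliminary mini-chain checking that $p_1$ is not too small, so that the conditional tests terminate. In effect, the paper replaces your hoped-for deterministic mod-$n$ miss-counter by the data-driven ``clock'' of symbol-$1$ appearances---an idea that is not recoverable from your sketch as written.
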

It is worth noting that the algorithm achieving this upper bound is deterministic. Our lower bound holds for randomized algorithms.
\begin{theorem}\label{thm:lower_bound}
	\begin{align}
	S^*(n,\veps) =  \Omega \left(n+\frac{1}{\veps} \right).   
	\end{align}
\end{theorem}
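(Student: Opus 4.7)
Throughout, fix an arbitrary (possibly randomized) $S$-state tester $(f,d)$. The memory sequence $M_t$ is a Markov chain on $[S]$ whose transition matrix
\[
P_p(i,j)=\sum_{x\in[n]} p(x)\Pr[f(i,x)=j]
\]
is a \emph{linear} function of $p$. Let $\pi_u$ denote the Cesàro limit of the law of $M_t$ started at $s_{\text{init}}$ under $p=u$; this is a stationary vector of $P_u$ by the Markov ergodic theorem. Reading off \eqref{eq:pefd}, $\Pe(f,d|\m H_0)=\sum_{i:d(i)=1}\pi_u(i)$ and, for any $q\in\Theta$ with Cesàro limit $\pi_q$, $\Pe(f,d|\m H_1)\ge\sum_{i:d(i)=0}\pi_q(i)$. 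Hence, for each of the two additive pieces of the bound, it suffices to exhibit a single $q\in\Theta$ with $\pi_q=\pi_u$: the two errors then sum to $1$, forcing at least one above $1/2>\delta$.

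For the $\Omega(n)$ piece, consider the polytope $\m C=\{p\ge 0:\sum_x p(x)=1,\ \pi_u^TP_p=\pi_u^T\}$. The stationarity equation imposes $S$ linear constraints on $p$, exactly one of which is implied by normalization, so $\m C$ is a face of the simplex of affine dimension at least $n-S$. Standard polyhedral theory then produces a vertex $p^\star$ of $\m C$ supported on at most $S$ indices, which therefore satisfies $\dtv(p^\star,u)\ge(n-S)/(2n)$. The mixture $q=(1-\eta)p^\star+\eta u$ still lies in $\m C$, has full support for any $\eta>0$, and achieves $\dtv(q,u)>\veps$ whenever $S<n(1-2\veps)$ and $\eta$ is chosen small enough. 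Full support makes $P_q$ share the zero/nonzero pattern of $P_u$, and hence the same recurrent-class decomposition; combined with $\pi_u^TP_q=\pi_u^T$, irreducibility within each recurrent class pins $\pi_q=\pi_u$.

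For the $\Omega(1/\veps)$ piece, I would take a single Paninski-type distribution: for $n$ even, set $p^{\mathrm{Pan}}(x)=(1+4\veps)/n$ on odd $x$ and $(1-4\veps)/n$ on even $x$, so that $\dtv(p^{\mathrm{Pan}},u)=2\veps>\veps$. Conditioned on the parity $b_t=X_t\bmod 2$, the sample $X_t$ is uniform on the appropriate half of $[n]$ under \emph{both} hypotheses, so $(b_t)_{t\ge 1}$ is a sufficient statistic. Given any $S$-state uniformity tester $(f,d)$, drawing each $X_t$ uniformly from $\{x:x\bmod 2=b_t\}$ and feeding it to $f$ yields an $S$-state randomized tester for $\Bern(1/2)$ versus $\Bern(1/2+2\veps)$ with the same error probability. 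The classical Hellman--Cover-type memory lower bound for finite-automaton sequential hypothesis tests then furnishes the required $\Omega(1/\veps)$ bound.

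The principal technical obstacle is the last step of the $\Omega(n)$ argument, where $\pi_q=\pi_u$ is immediate only when $P_u$ is irreducible from $s_{\text{init}}$. If $P_u$ instead has several recurrent classes reachable from $s_{\text{init}}$, the constraint $\pi_u^TP_q=\pi_u^T$ forces the within-class stationaries under $P_q$ to coincide with those of $P_u$ but does \emph{not} control the absorption probabilities from $s_{\text{init}}$ into the different classes, which depend on $P_q$ over the transient block in a non-linear way. Handling this---for instance by restricting the state space to the recurrent class carrying $\pi_u$'s mass, which is justified since transient states contribute nothing to the asymptotic error---is the most delicate piece of bookkeeping in the proof.
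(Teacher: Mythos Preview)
Your $\Omega(1/\veps)$ argument is the same reduction the paper uses: manufacture samples in $[n]$ from the Bernoulli bits together with an independent uniform auxiliary, then invoke the Hellman--Cover bound. For the irreducible $\Omega(n)$ part, your polytope-vertex argument is a clean rephrasing of what the paper does via Lemma~\ref{lem:ptoq} (whose proof is itself a vertex argument on a cube): both routes produce a distribution $q$ satisfying the stationary equations $\pi_u^T P_q=\pi_u^T$ with $\dtv(q,u)\gtrsim (n-S)/n$, and then use uniqueness of the stationary law under irreducibility. Your formulation directly on the simplex, with the vertex forced to have support of size at most $S$, is arguably more transparent; the paper's route via the kernel of $B=MA$ and an $\ell^\infty\to\ell^1$ norm bound yields essentially the same object.

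The genuine gap is exactly where you flag it, and your proposed fix does not close it. When several recurrent classes are reachable from $s_{\text{init}}$, the Ces\`aro limit $\pi_u$ already has its mass split across them according to the absorption probabilities under $P_u$, so there is no single ``recurrent class carrying $\pi_u$'s mass'' to restrict to. Your constraint $\pi_u^T P_q=\pi_u^T$ pins the within-class stationaries but leaves the absorption probabilities under $P_q$ free; since these are governed by $(I-P_q^{TT})^{-1}$ on the transient block, they are \emph{not} linear in $q$ and cannot simply be appended to your polytope without destroying the dimension count. Restricting the tester to one recurrent class also changes the tester (new initial state), and the per-class conditional errors need not individually be at most $\delta$.

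The paper sidesteps all of this with a different device: it connects every recurrent state back to $s_{\text{init}}$ with a tiny probability $1/(M\cdot\mix)$, shows via Kac's return-time formula and a mixing-time bound (Lemmas~\ref{lem:make_ergodic} and~\ref{lem:new_pi}) that the resulting irreducible chain has asymptotic error within $O(1/M)$ of the original, and then applies the irreducible-case bound to that chain. You would need either this ergodicization trick or some new argument controlling the absorption probabilities to complete the reducible case.
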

We note that both bounds hold also for the hypothesis testing problem of the collision probability, in which we test between collision probability that is either $\frac{1}{n}$ or greater than $\frac{1+\veps ^2}{n}$.

\section{Related work}\label{subsec:related_work}\label{sec:related}

The study of learning and estimation under memory constraints has been initiated in the late 1960s by Cover and Hellman (with a precursor by~\cite{robbins1956sequential}) and remained an active research area for a decade or so. Most of the old work on learning with finite memory has been focused on the hypothesis testing problem. For the problem of deciding whether an i.i.d. sequence was drawn from $\Bern(p)$ or $\Bern(q)$,~\cite{cover1969hypothesis} described a time-varying finite-state machine with only $S=4$ states, whose error probability approaches zero with the sequence length. As time-varying procedures suffer from the shortcomings described earlier,~\cite{hellman1970learning} addressed the same binary hypothesis testing problem within the class of \textit{time-invariant randomized} procedures. They have found an \emph{exact} characterization of the smallest attainable error probability for this problem. In a recent paper,~\cite{berg2020binary} derived a lower bound on the error probability attained by any $S$-state deterministic procedure, showing that while the smallest attainable error probability decreases exponentially fast with $S$ in both the randomized and the deterministic setups, the base of the exponent can be arbitrarily larger in the randomized case.

The work on the uniformity testing problem has been initiated by~\cite{goldreich2000testing}, who proposed a simple and natural uniformity tester that relies on the \textit{collision probability} of the unknown distribution, which is the probability that two samples drawn according to $p$ are equal, and succeeds after drawing $O(\sqrt{n}/\veps^4)$ samples. In subsequent work,~\cite{paninski2008coincidence} showed an information-theoretic lower bound of $\Omega(\sqrt{n}/\veps^2)$ on the sample complexity and provided a matching upper bound that holds under some assumption on $\veps$, which has been later shown to be unnecessary~\cite{diakonikolas2014testing}. It was recently shown in~\cite{goldreich2016uniform} that the more general problem of distribution testing can be reduced to uniformity testing with only a constant factor loss in sample complexity. In~\cite{acharya2019inference}, the authors addressed the distributed variant of the problem: each player receives one sample from the distribution, about which they can only provide limited information to a central referee. 

The space-complexity (which is the minimal memory in bits needed for the algorithm) of estimating the empirical collision probability is well-studied for worst case data streams of
a given length, dating back to the seminal work of~\cite{alon1999space}. In~\cite{crouch2016stochastic} the authors studied the trade-off between sample complexity and space complexity for the problem of estimating the collision probability. 
The trade-off between sample complexity and space/communication complexity for the distribution testing problem has recently been addressed by~\cite{diakonikolas2019communication}, who gave upper and lower bounds on the sample complexity under memory constraints. On a broader level, the problem of estimating statistics with bounded memory is receiving more and more attention in the machine learning literature lately, see, e.g.,~\cite{chien2010space,mcgregor2012space,kontorovich2012statistical,sd15,svw16,raz18,ds18,dks19,ssv19}. Another closely related active line of work is that of estimating statistics under limited communication, e.g.,~\cite{zdjw13,garg2014communication,bgmnw16,xr17,jordan2018communication,han2018geometric,how18,bho18,act18,hadar2019communication,hadar2019distributed,acharya2020inference}.
\section{Preliminaries}
In this section, we introduce the mathematical notation and background necessary to state
and prove our results in the following sections.
\subsection{Notations}
We write $[n]$ to denote the set $\{1,\ldots,n\}$, and consider discrete distributions over $[n]$. We use the notation $p_i$ to denote the probability of element $i$ in distribution $p$, and the notation $p_i^{(i,j)}$ to denote the probability of element $i$ restricted to $(i,j)$, that is, $p_i^{(i,j)}=\frac{p_i}{p_i+p_j}$. The $\ell^k$ norm of a distribution is $||p||_k=\sqrt[k]{\sum_{i=1}^n |p_i|^k}$. The total variation distance between distributions $p$ and $q$ is defined as half their $\ell^1$ distance, i.e., $\dtv(p,q)=\frac{1}{2}||p-q||_1=\frac{1}{2}\sum_{i=1}^n |p_i-q_i|$.
The collision probability of a distribution $p$ is defined as the squared $\ell_2$ norm of $p$, that is, $\col(p)=||p||_2^2=\sum_{i=1}^n p_i^2 $.
\subsection{Equivalence to collision estimation}
As we mentioned, the problem of uniformity testing with limited samples and unlimited memory is equivalent to the collision testing problem, due to the following two reasons:
\begin{enumerate}
    \item $\dtv{(p,u)}>\veps$ implies $\col (p)>\frac{1+\veps ^2}{n}$, since, using the Cauchy-Schwarz inequality:
    \begin{align}
       \col (p)&\nonumber =\sum_{i=1}^n p_i^2\nonumber\\&=\frac{1}{n}+\sum_{i=1}^n \left(p_i-\frac{1}{n}\right)^2\nonumber\\&\geq \frac{1}{n}+\frac{1}{n}\left(\sum_{i=1}^n\left|p_i-\frac{1}{n}\right|\right)^2\\&=\frac{1+\dtv{(p,u)}^2}{n},\label{eq:coll_lower}
    \end{align}
and we can solve the latter problem with $O(\sqrt{n} / \veps^2)$ samples, hence we can also solve the former with the same number of samples.  
    \item  The lower bound on the sample complexity is proven for the Paninski prior, $q^{\text{Pan}}(z)$, whose value at each coordinate $i\in [n]$ is determined by the vector $z\in \{-1,1\}^{n/2}$, and is given by 
    \begin{align}
        q^{\text{Pan}}_i(z)= \begin{cases}
        \frac{1+\veps z_{i/2}}{n},& i \text{ even}\\\frac{1-\veps z_{(i+1)/2}}{n},& i \text{ odd}. \end{cases}\label{eq:panin}
    \end{align}
    It is not difficult to see that $\dtv{(u,q^{\text{Pan}})}=\veps$ and $\col (q^{\text{Pan}}(z))=\frac{1+\veps ^2}{n}$, for all $z\in \{-1,1\}^{n/2}$. Now taking $z_1,\ldots z_{n/2}$ to be Rademacher random variables, we obtain a distribution on $\Theta$, for which the sample complexity in the Bayesian setting is $\Omega(\sqrt{n} / \veps^2)$. This implies the same lower bound for the composite hypothesis testing problem.
\end{enumerate}

\section{Upper Bound}\label{sec:upper}
Following the above equivalence, it seem only natural to try and obtain upper bounds by utilizing collision-based estimators in our bounded memory framework as well. Naively implementing the collision counting decision rule of~\cite{goldreich2000testing,paninski2008coincidence} requires $\Omega(n^{\sqrt{n}/\veps^2})$ states. However, since the number of samples is unbounded, this algorithm is extremely wasteful. 
As a warm-up, we first propose the following variation, which uses the unlimited number of samples to save memory, by only counting consecutive collisions. Note that the probability that two consecutive samples are equal under $p$ is exactly $\sum_{i=1}^n p_i^2=\col (p)$. We define a \textit{collision machine} to be a machine that, at each time point, outputs a $1$ if the current sample is the same as the preceding sample and $0$ otherwise. This can be implemented using $O(n)$ states, since we are storing an element of the domain (the preceding sample) to check for collisions, and results in a Bernoulli process with parameter $p=\frac{1}{n}$ under the uniform distribution, or a Bernoulli process with parameter $p> \frac{1+\veps ^2}{n}$ under the alternative (due to~\eqref{eq:coll_lower}). Thus, if we can successfully distinguish between two i.i.d Bernoulli processes with parameters that are roughly $\veps^2/n$ apart, our uniformity tester will separate the hypotheses successfully. To that end, we appeal to~\cite{berg2021deterministic}, where we defined the following machine which, given $X_1,X_2,\ldots \sim \Bern(\theta)$, can distinguish between the composite hypotheses $\theta < q$ and $\theta > p$, for $p>q$. \footnote{Their machine was designed to solve the \emph{simple} binary hypothesis test $\mathcal{H}_0:\{\theta=p\}$ vs. $\mathcal{H}_1:\{\theta=q\}$, but the difference between the two problems is not significant.}

\begin{definition}
  $\isit(N,p,q)$ is the machine with $N\geq 4$ states depicted in Figure~\ref{fig:BHT_Machine},
  designed to decide between the hypotheses $\mathcal{H}_0:\{\theta>p\}$ vs. $\mathcal{H}_1:\{\theta<q\}$, for some $0<q<p<1$. The machine is initialized at state $s$ and evolves according to the sequence of input bits $X_1,X_2,\ldots$. If the machine observes a run of $N-s$ ones before observing a run of $s-1$ zeros, it decides $\mathcal{H}_0$, otherwise it decides $\mathcal{H}_1$.
  The initial state of the machine is $s=f(N,p,q)$, where
  \begin{align}
    f(N,p,q) \triangleq  2+\left\lceil \frac{\log pq}{\log p(1-p)+\log q(1-q)}(N-3)\right\rfloor,\label{eq:start_state}
  \end{align}
  is an integer between $2$ and $N-1$. We denote the (worst case) error probability of the machine by $\Pe^{\isit(N,p,q)}=\max\left\{p^0_1,p^1_0\right\}$, where
  \begin{align}
  p^0_1&=\sup_{\theta<q}~~\Pr_{X_1,X_2\ldots\stackrel{i.i.d.}{\sim}\Bern(\theta)}\left(\isit(N,p,q)\text{ decides } \mathcal{H}_0 \right),\\
  p^1_0&=\sup_{\theta>p}~~\Pr_{X_1,X_2\ldots\stackrel{i.i.d.}{\sim}\Bern(\theta)}\left(\isit(N,p,q)\text{ decides } \mathcal{H}_1 \right).
  \end{align}

\end{definition}
\begin{center}
\begin{figure}[H]
\centering
\setlength\belowcaptionskip{-1.4\baselineskip}
\begin{tikzpicture}
  \tikzset{
    >=stealth',
    node distance=0.92cm,
    state/.style={font=\scriptsize,circle, align=center,draw,minimum size=20pt},
    dots/.style={state,draw=none}, edge/.style={->},
  }
  \node [state ,label=center:$1$] (S0)  {} ;
  \node [state] (S0-1)   [right of = S0]   {};
  \node [dots] (dots1)   [right of = S0-1]   {$\cdots$};
  \node [state] (1l) [right of = dots1]  {};
  \node [state ,label=center:$s$] (0)   [right of = 1l] {};
  \node [state] (1r) [right of = 0]   {};
  \node [dots]  (dots2)  [right of = 1r] {$\cdots$};
  \node [state] (S1-1) [right of = dots2]  {};
  \node [state ,label=center:$N$] (S1) [right of = S1-1]  {};
  \path [->,draw,thin,font=\footnotesize]  (S0-1) edge[bend left=45] node[below ] {$0$} (S0);
  \path [->,draw,thin,font=\footnotesize]  (0) edge[bend left=45] node[below] {$0$} (1l);
  \path [->,draw,thin,font=\footnotesize]  (1r) edge[bend left=45] node[below right] {$0$} (1l);
  \path [->,draw,thin,font=\footnotesize]  (S1-1) edge[bend left=45] node[below right] {$0$} (1l);
  \path [->,draw,thin,font=\footnotesize]  (1l) edge[bend left=45] node[below ] {$0$} (dots1);
  \path [->,draw,thin,font=\footnotesize]  (S0-1) edge[bend left=45] node[above left] {$1$} (1r);
  \path [->,draw,thin,font=\footnotesize]  (1l) edge[bend left=45] node[above left] {$1$} (1r);
  \path [->,draw,thin,font=\footnotesize]  (0) edge[bend left=45] node[above ] {$1$} (1r);
  \path [->,draw,thin,font=\footnotesize]  (1r) edge[bend left=45] node[above] {$1$} (dots2);
   \path [->,draw,thin,font=\footnotesize]  (S0) edge[loop left] node[below] {$0/1$} (S0);
  \path [->,draw,thin,font=\footnotesize]  (S1-1) edge[bend left=45] node[above ] {$1$} (S1);
 \path [->,draw,thin,font=\footnotesize]  (S1)  edge[loop right]  node[above ]{$0/1$} (S1);

\end{tikzpicture}
\caption{$\isit(N,p,q)$ - Deterministic Binary Hypothesis Testing Machine  }\label{fig:BHT_Machine}
\end{figure}
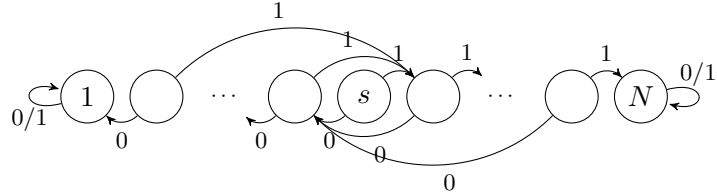
\end{center}

In the same paper, we have shown that with $N=O(K)$ states, the machine $\isit(N,p,q)$ can decide whether $\theta>p$ or $\theta<q=p-1/K$ with constant error probability $\delta<1/2$.

\begin{lemma} [~\cite{berg2021deterministic}]\label{lem:N_eps}
For any $\frac{2}{K}\leq  p\leq 1-\frac{1}{K}$, $q=p-\frac{1}{K}$ and $0<\delta<1/2$, let
\begin{align}
N = N(\delta,p,K) \triangleq 3+\left\lceil K\cdot 6\log\frac{2}{\delta\cdot\left(p-\frac{1}{K}\right)(1-p)} \right\rceil.
\label{eq:NepK}
\end{align}
Then
\begin{align}
\Pe^{\isit(N,p,q)}<\delta.
\end{align}
\end{lemma}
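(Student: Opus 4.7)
The plan is to analyze $\isit(N,p,q)$ as an absorbing Markov chain on $\{1,\ldots,N\}$, obtain the absorption probability in closed form, and then calibrate $N$ so that both error probabilities fall below $\delta$.

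First I would relabel the non-absorbing states by run structure: let $R_k$ $(1\le k\le N-s)$ denote the state reached after $k$ consecutive ones from the initial state $s$, and $L_k$ $(1\le k\le s-1)$ the state reached after $k$ consecutive zeros, so that $R_{N-s}=N$ and $L_{s-1}=1$ are the absorbing states. Reading the transitions from Figure~\ref{fig:BHT_Machine}, from any $R_k$ with $k<N-s$ a one advances to $R_{k+1}$ and a zero jumps to $L_1$ (the freshly observed zero itself starts a new zero-run), and symmetrically from any $L_k$. Under $X_i\stackrel{\text{i.i.d.}}{\sim}\Bern(\theta)$, first-step analysis on each arm is a pair of coupled geometric-sum recursions whose solution is
\begin{align*}
P_s(\theta)\;\dfn\;\Pr_\theta(\text{absorbed at }N)\;=\;\frac{\theta^{N-s-1}\bigl(1-(1-\theta)^{s-1}\bigr)}{(1-\theta)^{s-2}\bigl(1-\theta^{N-s-1}\bigr)+\theta^{N-s-1}}.
\end{align*}

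A standard monotone coupling of two Bernoulli sequences on common uniforms shows $P_s(\theta)$ is non-decreasing in $\theta$: increasing $\theta$ only flips zeros into ones, and each such flip can only advance the chain toward $N$. Hence $\sup_{\theta<q}P_s(\theta)=P_s(q)$ and $\sup_{\theta>p}(1-P_s(\theta))=1-P_s(p)$, so it suffices to prove $P_s(q)\le\delta$ and $1-P_s(p)\le\delta$. Introducing the dominant ratio $\rho_\theta\dfn\theta^{N-s-1}/(1-\theta)^{s-2}$, elementary bounds on the closed form yield $P_s(q)\le 2\rho_q$ (once $q^{N-s-1}\le 1/2$) and $1-P_s(p)\le 1/\rho_p$ (once $\rho_p\ge 1$). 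Writing $a=|\log p|,\,b=|\log(1-p)|,\,c=|\log q|,\,d=|\log(1-q)|$ and substituting the prescribed ratio $(s-2)/(N-3)=(a+c)/(a+b+c+d)$ from~\eqref{eq:start_state} into $\log\rho_\theta$, a one-line computation gives
\begin{align*}
|\log\rho_q|\;=\;\log\rho_p\;=\;\frac{(N-3)(bc-ad)}{a+b+c+d},
\end{align*}
so the initial state is chosen precisely to equalize the two tail exponents, and both error bounds collapse to the single inequality $(N-3)(bc-ad)/(a+b+c+d)\ge\log(2/\delta)$.

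The main obstacle is translating this algebraic condition into the concrete $N$ claimed in the lemma. The decomposition $bc-ad=d(c-a)+a(b-d)+(b-d)(c-a)$ keeps every term nonnegative because $p>q$ forces both $c>a$ and $b>d$. Applying $\log(1+x)\ge x/(1+x)$ to $c-a=\log(p/q)$ and $b-d=\log\tfrac{1-q}{1-p}$, together with $|\log p|\ge 1-p$ and $|\log(1-q)|\ge q$, yields $bc-ad\ge a/(K(1-q))+d/(Kp)$. A short case analysis on whether $p$ is small, moderate, or close to $1$ then shows $(a+b+c+d)/(bc-ad)\le O(K)$ uniformly over the admissible range $2/K\le p\le 1-1/K$. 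The stated value $N-3\ge\lceil 6K\log\tfrac{2}{\delta q(1-p)}\rceil$ then dominates the required $\log(2/\delta)\cdot(a+b+c+d)/(bc-ad)$, with the extra slack absorbing the integer ceiling in~\eqref{eq:start_state} as well as the factors $(1-\theta^{N-s-1})$ and $(1-(1-\theta)^{s-1})$ dropped when passing from $P_s(\theta)$ to $\rho_\theta$. Verifying this dominance uniformly over the full admissible range of $(p,\delta)$, rather than only the representative regimes $p\approx 1/2$, $p\approx 2/K$, and $p\approx 1-1/K$, is the most delicate piece of bookkeeping.
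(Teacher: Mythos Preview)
The paper does not prove this lemma: it is quoted from \cite{berg2021deterministic} and used as a black box, so there is no in-paper argument to compare against. On its own merits your approach is correct and is the natural one. The closed form for $P_s(\theta)$ checks out (in particular the numerator of $1-P_s(p)$ telescopes to $1-p^{N-s}$, so $1-P_s(p)\le 1/\rho_p$ holds without any side condition on $\rho_p$), monotonicity in $\theta$ follows from the coupling you describe, and the choice~\eqref{eq:start_state} of $s$ does equalize the two tail exponents to $(N-3)(bc-ad)/(a+b+c+d)$ exactly as you compute. The residual bookkeeping you flag---uniform control of $(a+b+c+d)/(bc-ad)$ over $\frac{2}{K}\le p\le 1-\frac{1}{K}$ together with absorption of the ceiling in $s$---is genuine but routine; the extra factor $\log\frac{1}{q(1-p)}$ in the stated $N$ is there precisely to cover the boundary regimes $p\approx \frac{2}{K}$ and $p\approx 1-\frac{1}{K}$ where one of $a,b,c,d$ becomes large.
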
       
We can thus separate the two hypotheses with $O(n/\veps^2)$ states, so overall we need the product of the two parts, which is $O(n^2/\veps^2)$ states, a memory size that is \textit{quadratic} in $n/\veps$. We note that in order to distinguish hypotheses that are $1/K$ apart, we must have $\Omega(K)$ states [~\cite{hellman1970learning}, Theorem $3$], therefore once the reduction to pairwise collisions has been done, using $\isit$ is essentially the best possible, up to constants.
It is worth mentioning that the effective sample complexity of this machine (the expected number of samples needed for the algorithm to converge) is $O(n/\veps^2)$, which is indeed greater then the $O(\sqrt{n}/\veps^2)$ sample complexity without memory constraints. This follows since $n$ samples will result in $1$ collision on average under the uniform distribution, and $1+\Omega(\veps^2)$ collisions on average under the alternative. However, the memory used is substantially smaller than  $n^{1/\veps^2}$, the smallest considered by~\cite{diakonikolas2019communication}. As it turns out, we can find good uniformity testers with even smaller memory.   
In the following, we describe a uniformity tester with a memory size that is only \textit{linear} in $n/\veps$, up to logarithmic factors.

\subsection{Proof of Theorem~\ref{thm:upper_bound}}
In a nutshell, our algorithm reduces uniformity testing (and collision estimation) to a sequence of binary tests. The intuition here is that if $\dtv(p,u) > \veps$ then there must be some symbols $(i,j) \in [n] \times [n]$ such that either $\Pr(X=i | X \in \{i,j\})$ or $\Pr(X=j | X \in \{i,j\})$ are sufficiently biased, where by sufficiently we mean with some bias $\Omega(\veps)$. 
We could then clearly run binary tests over all $n^2$ pairs of symbols, but it turns out we can reduce this significantly to just $2n$ tests. One way to do this is to test, for all $i>1$, whether $\Pr(X=1 | X \in \{1,i\})$ or $\Pr(X=i | X \in \{1,i\})$ are sufficiently biased. 
If $p=u$, none of these probabilities are sufficiently biased, whereas if $\dtv(p,u) > \veps$, at least one of them is (we prove this in the sequel). Of course, this machine can get stuck for an arbitrarily long time if $\Pr(X \in \{1,i\})$ is extremely small (or even zero) for some $i>1$, as it will be waiting for an arbitrarily low probability event. To circumvent this issue, we ensure that the algorithm terminates at some point by first checking that $\Pr(X=1)$ is not too small. Specifically, we check that it is not much smaller than $n^{-1}$. The reason is since $p$ can only be uniform, in which case all probabilities are $n^{-1}$, or $\veps$-far from uniform. Thus, if $\Pr(X=1)$ is much smaller than $n^{-1}$, we can confidently assert that $p$ is not uniform. Otherwise, we continue to the previously mentioned testing of pairs.  

Let us now describe the particular structure of the proposed machine. In our construction, the state space $[S]$ is partitioned into $K=2n+1$ disjoint sets denoted by $\mathcal{S}_1,\ldots,\mathcal{S}_K$, which we will refer to as \emph{mini-chains}. Each mini-chain performs a binary hypothesis test, and we move sequentially from one mini-chain to the next, until we become confident of our decision, in which case we enter an absorbing state. $\mathcal{S}_1$ is a $\isit(N,p,q)$ machine for testing whether the probability of the symbol $1$, namely $p_1$, is sufficiently large, and its parameters are $q=\frac{1}{4n}$, $p=\frac{1}{2n}$, and
\begin{align}
N_1 = N\left(\frac{\delta}{2},\frac{1}{2n},\frac{1}{4n}\right) \triangleq 3+\left\lceil 6\cdot 4n \log\frac{4\cdot 4n}{\delta\cdot\left(1-\frac{1}{2n}\right)} \right\rceil
\end{align}
states. If it decides in favor of $q$ it outputs $\m{H}_1$ and terminates, whereas if it decides in favor of $p$ it continues to $\m{S}_2$. Mini-chains $\m{S}_2$ to $\m{S}_{2n}$ test consecutively for the conditional distributions $p_1^{(1,2)},p_2^{(1,2)}$ up to $p_1^{(1,n)},p_n^{(1,n)}$ between $q=\frac{1}{2}+\frac{\tilde{\veps}}{2}$ and $p=\frac{1}{2}+\tilde{\veps}$, for $\tilde{\veps}=\frac{\veps}{8+4\veps}$. To that end, we use identical $\isit(N,p,q)$ machines which, when testing for the pair $(1,i)$, change their state only upon observing symbols $1$ or $i$. Let $\Pe\triangleq\frac{\delta/2}{2n}$, and construct the machines to attain worst case error probability of at most $\Pe$. Thus, the number of states needed is, for all $2\leq j \leq 2n+1$,
\begin{align}
N_j = N\left(\Pe,\frac{1}{2}+\tilde{\veps},\frac{\tilde{\veps}}{2}\right) \triangleq 3+\left\lceil \frac{12}{\tilde{\veps}} \log\frac{4}{\Pe\cdot(1+\tilde{\veps})\left(1/2-\tilde{\veps}\right)} \right\rceil
\end{align}
states. Here, if the machine decides in favor of $q$ it moves to the next mini-chain, whereas if it decides in favor of $p$ it outputs $\m{H}_1$ and terminates. The construction of $\m{S}_{2n+1}$ is similar, with the exception that if it decides in favor of $q$ it outputs $\m{H}_0$ and terminates. Thus, the machine only outputs $\m{H}_0$ after all binary tests are completed, and none of the conditional probabilities are found to be too biased. The binary hypothesis testing machines are concatenated left to right, such that the $2i$-th machine, for $i\in [n]$, is as depicted in Figure~\ref{fig:mini_chain}.
\begin{center}
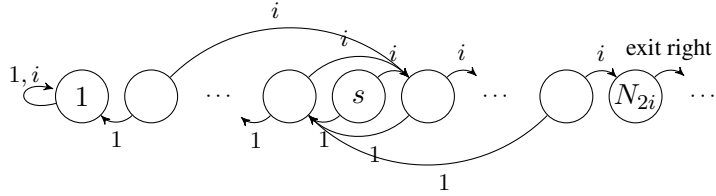
\begin{figure}[H]
\setlength\belowcaptionskip{-1.7\baselineskip}
\centering
\begin{tikzpicture}
  \tikzset{
    >=stealth',
    node distance=0.92cm,
    state/.style={font=\scriptsize,circle, align=center,draw,minimum size=20pt},
    dots/.style={state,draw=none}, edge/.style={->},
  }
  \node [state ,label=center:$1$] (S0)  {} ;
  \node [state] (S0-1)   [right of = S0]   {};
  \node [dots] (dots1)   [right of = S0-1]   {$\cdots$};
  \node [state] (1l) [right of = dots1]  {};
  \node [state ,label=center:$s$] (0)   [right of = 1l] {};
  \node [state] (1r) [right of = 0]   {};
  \node [dots]  (dots2)  [right of = 1r] {$\cdots$};
  \node [state] (S1-1) [right of = dots2]  {};
  \node [state ,label=center:$N_{2i}$] (S1) [right of = S1-1]  {};
   \node [dots]  (dots3)  [right of = S1] {$\cdots$};
  \path [->,draw,thin,font=\footnotesize]  (S0-1) edge[bend left=45] node[below ] {$1$} (S0);
  \path [->,draw,thin,font=\footnotesize]  (0) edge[bend left=45] node[below] {$1$} (1l);
  \path [->,draw,thin,font=\footnotesize]  (1r) edge[bend left=45] node[below right] {$1$} (1l);
  \path [->,draw,thin,font=\footnotesize]  (S1-1) edge[bend left=45] node[below right] {$1$} (1l);
  \path [->,draw,thin,font=\footnotesize]  (1l) edge[bend left=45] node[below ] {$1$} (dots1);
  \path [->,draw,thin,font=\footnotesize]  (S0-1) edge[bend left=45] node[above left] {$i$} (1r);
  \path [->,draw,thin,font=\footnotesize]  (1l) edge[bend left=45] node[above left] {$i$} (1r);
  \path [->,draw,thin,font=\footnotesize]  (0) edge[bend left=45] node[above ] {$i$} (1r);
  \path [->,draw,thin,font=\footnotesize]  (1r) edge[bend left=45] node[above] {$i$} (dots2);
  \path [->,draw,thin,font=\footnotesize]  (S0) edge[loop left] node[above] {$1,i$} (S0);
  \path [->,draw,thin,font=\footnotesize]  (S1-1) edge[bend left=45] node[above ] {$i$} (S1);
  \path [->,draw,thin,font=\footnotesize]  (S1)  edge[bend left=45]  node[above ]{exit right} (dots3);
\end{tikzpicture}
\caption{Illustration of mini-chain $\m{S}_{2i}$  }\label{fig:mini_chain}
\end{figure}
\end{center}
The structure of the $(2i+1)$-th machine, for $1\leq i \leq n-1$, is similar, except that the arrow labels $1$ and $i$ are swapped, since we test for $p_i^{(1,i)}$ instead of $p_1^{(1,i)}$. The only difference in the $2n+1$ machine is the last state, which is an absorbing one. The total number of states is:
\begin{align}
    S=\sum_{j=1}^{2n+1} N_j&=N_1+2n \cdot N_2\\&\leq 4(2n+1)+24n \cdot\log\frac{4\cdot 4n}{\delta\left(1-\frac{1}{2n}\right)}+\frac{24n}{\tilde{\veps}} \cdot\log\frac{4\cdot 4n}{\delta(1+\tilde{\veps})\left(1/2-\tilde{\veps}\right)}\\&\leq 4(2n+1)+24n\left(1+\frac{1}{\tilde{\veps}}\right)\cdot\log\frac{16n}{\delta(1+\tilde{\veps})\left(1/2-\tilde{\veps}\right)}\\&\leq 9n+\frac{10\cdot 24n}{\veps}\cdot\log \frac{40n}{\delta }.
\end{align}
We now show that this is indeed an $(S,\delta)$ uniformity tester. Consider first the case $p=u$, which implies $p_i=\frac{1}{2}$ for all $i$. The probability of outputting $\mathcal{H}_1$ in $\m{S}_1$ is upper bounded by $\frac{\delta}{2}$. The probability of outputting $\mathcal{H}_1$ in any of the consecutive binary tests is the probability we make an error in \textit{at least one} binary test, and is, at most
\begin{align}
    1-(1-\Pe)^{2n} = 1- \left(1- \frac {\delta/2}{2n}\right)^{2n}\leq 1-e^{- \delta/2}\leq \frac{\delta}{2}.
\end{align}
Hence, by the union bound, the total error probability is at most $\delta$.
For the case $p\in \Theta$, we show that if $\dtv{(p,u)}>\veps$, then the probability of outputting $\mathcal{H}_0$ is smaller than $\delta$. To that end, it is sufficient to show that there is some $i$ for which $\max \{p_1^{(i,1)},p_i^{(i,1)}\}>\frac{1}{2}+\tilde{\veps}$, since that would imply the probability that either $\m{S}_{2i}$ or $\m{S}_{2i+1}$ output $\m{H}_1$ is at least $1-\Pe$, which in turn implies that the overall error probability of the algorithm is at most $\Pe=\frac {\delta/2}{2n}<\delta$. Thus, we prove the following lemma in the appendix.
\begin{lemma}\label{lem:biased}
If $\dtv{(p,u)}>\veps$, then there exist some $i \in [n]$ such that
\begin{align}
    \max \{p_1^{(i,1)},p_i^{(i,1)}\}>\frac{1}{2}+\tilde{\veps}.
\end{align}
\end{lemma}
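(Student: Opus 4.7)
The plan is to argue by contradiction. Suppose that for every $i\in[n]$, both $p_1^{(i,1)}$ and $p_i^{(i,1)}$ are at most $\frac{1}{2}+\tilde\veps$. Since these two conditional probabilities sum to $1$, the assumption is equivalent to $|p_1^{(i,1)}-\tfrac{1}{2}|\leq\tilde\veps$, which after clearing denominators becomes the very clean ``pairwise closeness'' condition $|p_1-p_i|\leq 2\tilde\veps(p_1+p_i)$ for every $i\in[n]$. Splitting on the sign of $p_1-p_i$ in this inequality immediately yields a multiplicative two-sided bound, $\alpha\,p_1\leq p_i\leq \alpha^{-1}p_1$ for every $i$, where $\alpha:=\frac{1-2\tilde\veps}{1+2\tilde\veps}$. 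In other words, $p$ is essentially multiplicatively flat around $p_1$.

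Next I would extract an additive consequence of this multiplicative flatness. Summing the two-sided bound and using $\sum_i p_i=1$ pins $np_1$ inside $[\alpha,\,1/\alpha]$, so in particular
\begin{align*}
n\bigl|p_1-1/n\bigr|\;\leq\; 1/\alpha - 1 \;=\; \frac{4\tilde\veps}{1-2\tilde\veps}.
\end{align*}
Then I would control the total variation directly by the triangle inequality $|p_i-1/n|\leq|p_i-p_1|+|p_1-1/n|$; summing over $i$ and feeding back the pairwise hypothesis to the first piece gives
\begin{align*}
\sum_{i=1}^n|p_i-1/n|\;\leq\; 2\tilde\veps(1+np_1)+n|p_1-1/n|\;\leq\; 2\tilde\veps\!\left(1+\frac{1}{\alpha}\right)+\frac{4\tilde\veps}{1-2\tilde\veps},
\end{align*}
and using the identity $1+1/\alpha=2/(1-2\tilde\veps)$ the right-hand side collapses to $\frac{8\tilde\veps}{1-2\tilde\veps}$. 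Hence $\dtv(p,u)\leq\frac{4\tilde\veps}{1-2\tilde\veps}$.

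To close the argument I would substitute $\tilde\veps=\frac{\veps}{8+4\veps}$ and simplify to $\dtv(p,u)\leq\frac{2\veps}{4+\veps}<\veps$, directly contradicting the standing hypothesis $\dtv(p,u)>\veps$. The constants $8$ and $4$ in the definition of $\tilde\veps$ are precisely calibrated to make this last inequality strict with a fixed gap. There is no real conceptual obstacle here; the only step that requires care is the translation of ``no pair is biased by more than $\tilde\veps$'' into the multiplicative relation $\alpha p_1\leq p_i\leq\alpha^{-1}p_1$, after which $p$ is forced to be nearly uniform in $\ell_1$ and a few lines of algebra close the loop.
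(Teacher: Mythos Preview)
Your proof is correct and follows essentially the same route as the paper: assume no pair $(1,i)$ is $\tilde\veps$-biased, deduce the multiplicative flatness $\alpha\,p_1\le p_i\le\alpha^{-1}p_1$ with $\alpha=\tfrac{1-2\tilde\veps}{1+2\tilde\veps}$, convert this to an $\ell_1$ bound on $p-u$, and plug in $\tilde\veps=\tfrac{\veps}{8+4\veps}$ for the contradiction. Your bookkeeping via the triangle inequality $|p_i-\tfrac1n|\le|p_i-p_1|+|p_1-\tfrac1n|$ is slightly cleaner than the paper's two-step chaining (which lands at the looser $\tfrac{8\tilde\veps}{(1-2\tilde\veps)^2}$ rather than your $\tfrac{8\tilde\veps}{1-2\tilde\veps}$), but both suffice.
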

While we leave the investigation of the optimal memory-sample complexity trade off to future work, it is worth mentioning that the effective sample complexity of the above algorithm is $O\left(n^2e^{O(1/\veps)}\right)$. This follows since there are $O(n)$ mini-chains that we traverse in sequence, each is activated every $O(n)$ samples on average, and the mixing time of each one is $e^{O(1/\veps)}$. Thus, if $\veps$ is taken to be a constant and $n$ is the dominating factor, we can loosely say that, with respect to the simpler collision tester proposed earlier, a quadratic saving in memory states is translated to a quadratic increase in the number of expected samples needed for testing.
\section{Lower Bound}
In this section we prove Theorem~\ref{thm:lower_bound}. We prove the $\Omega\left(1/\veps\right)$ bound via reduction to simple binary hypothesis testing. The proof of the $\Omega(n)$ bound is separated to the ergodic and non-ergodic cases, corresponding to irreducible and reducible machines respectively: in the first part we prove the ergodic case using linear algebra tools, and in the second part we show that any good uniformity tester that induces a non-ergodic Markov chain can be reduced to a good uniformity tester that induces an ergodic Markov chain. 
\subsection{Proof of the \texorpdfstring{$\Omega\left(1/\veps\right)$}{Lg} bound}
In this proof, we use the fact that a good uniformity tester is also a good binary hypothesis tester for hypotheses that are $\veps$ apart. Assume we have an ($S,\delta$) uniformity tester. We use this machine to construct a randomized binary hypothesis testing machine that decides whether an independent identically distributed sequence of bits $B_1,B_2,\ldots$ is either $\Bern\left(\frac{1}{2}\right)$ distributed or $\Bern\left(\frac{1-\veps}{2}\right)$ distributed as follows:
\begin{enumerate}
    \item Let $U_i\overset{iid}{\sim} \unif\left(\left[\frac{n}{2}\right]\right)$ and let $B_i\overset{iid}{\sim} \Bern(\theta)$, where $\theta$ is either $\frac{1}{2}$ or $\frac{1-\veps}{2}$.
    \item We generate the random variables $X_i \in [n]$ by setting $X_i=2U_i-B_i$. Note that if $\theta=\frac{1}{2}$, then $X_i \sim u$. On the other hand, if $\theta=\frac{1-\veps}{2}$, then
    \begin{align}
     X_i \sim p = \frac{1}{n}\cdot [1-\veps \hspace{2mm} 1+\veps \hspace{2mm} 1-\veps \hspace{2mm} 1+\veps \hspace{2mm} \ldots], 
    \end{align}
    so that $\dtv(p,u)=\veps$ ($p$ in this case is one of the priors in the Paninski family of~\eqref{eq:panin}).
    \item Thus, our randomized machine can distinguish between $\Bern\left(\frac{1}{2}\right)$ and $\Bern\left(\frac{1-\veps}{2}\right)$. Appealing to~\cite{hellman1970learning}[Theorem $3$], it therefore must have $S=\Omega(1/\veps)$ states. 
\end{enumerate}
\subsection{Proof of the \texorpdfstring{$\Omega(n)$}{Lg} bound for ergodic machines}
As we mentioned earlier, the Paninski prior is not a good choice for lower bounds in the memory limited regime, for the following reason: Any distribution in the Paninski family has either $p_1^{(1,2)}=\frac{1+\veps}{2}$ or $p_2^{(1,2)}=\frac{1+\veps}{2}$. Thus, we can test these two options with two $\isit(N,p=\frac{1+\veps}{2},q=p-\frac{\veps}{2})$ machines with $O(1/\veps)$ states, hence the Paninski mixture cannot give us a lower bound that depends on the dimension $n$. The reason for this is that since the number of samples is unlimited, we can afford to wait for events that sharply distinguish between the hypotheses, and, in particular relevance to the above argument, to wait for appearances of the first two symbols alone. When one cares about sample complexity only, then this approach is obviously very wasteful since many samples are ignored. But when one is concerned with memory complexity only, then waiting for rare binary events turns out to be the right approach, since reducing the dimension of the testing problem is of most importance. 

We first consider the case of ergodic machines, corresponding to irreducible Markov chains. The method of proof is the following: Any finite-state machine induces a Markov chain, that is characterized by some transition matrix $\Prob$.
Our method of proof is based on analysis of stationary distributions and properties of Markov chain transition matrices, specifically their null space. Firstly, we show that for any finite-state machine with $S =  O(\sqrt{n})$ states, there must exist some distribution $p$ over $[n]$ with $\dtv{(p,u)}>\veps$ that induces the same transition matrix as $u$. Secondly, we tighten this result by showing that when $S =  O(n)$ there is some distribution $p$ over $[n]$ with $\dtv{(p,u)}>\veps$ that induces the same stationary distribution as $u$ (but not necessarily the same transition matrix), thus $p$ is indistinguishable from $u$ when viewed through the state sequence of the machine.   
\begin{lemma}\label{lem:sameP}
Denote the transition matrix induced by a distribution $p$ as $\Prob(p)$. Then for any finite-state machine with $S =  O(\sqrt{n})$ there must exist some distribution $p$ with $\dtv{(p,u)}>\veps$ such that $\Prob(u)=\Prob(p)$.
\end{lemma}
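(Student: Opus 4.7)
The plan is to count the linear constraints imposed by $\Prob(p)=\Prob(u)$ and then produce a sufficiently sparse vertex of the resulting feasibility polytope, which will automatically be far from uniform in TV.

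Setting $A_{s,s'}=\{i\in[n]:f(s,i)=s'\}$, each entry of the transition matrix is the linear functional $\Prob(p)_{s,s'}=\sum_{i\in A_{s,s'}}p_i$, so the condition $\Prob(p)=\Prob(u)$ amounts to the $S^2$ linear equations $\sum_{i\in A_{s,s'}}p_i=|A_{s,s'}|/n$ for $s,s'\in[S]$. For each fixed $s$, summing the $S$ equations indexed by $(s,\cdot)$ yields the single identity $\sum_i p_i=1$, i.e.\ the same equation is obtained in $S$ redundant ways. Therefore the constraint system has rank at most $S^2-(S-1)=S^2-S+1$.

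Let $\m{K}=\{p\in\mathbb{R}^n:p\geq 0,\ \Prob(p)=\Prob(u)\}$. It is nonempty (it contains $u$) and bounded (it lies in the simplex). Any vertex $p^{\star}$ of $\m{K}$ activates $n$ linearly independent constraints, and at most $S^2-S+1$ of these can come from the equality constraints, so at least $k:=n-S^2+S-1$ of the non-negativity constraints $p_i\geq 0$ must be active; equivalently, $p^{\star}$ has at least $k$ zero coordinates. Whenever $S^2-S+1<n$, i.e.\ when $S=O(\sqrt{n})$ with a sufficiently small implicit constant, the equality constraints have rank strictly less than $n$, so $u$ is not itself a vertex and $\m{K}$ admits a vertex distinct from $u$.

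I then bound the TV distance from uniform of such a sparse vertex. The $k$ coordinates on which $p^{\star}$ vanishes each contribute $1/n$ to $\|p^{\star}-u\|_1$, accounting for $k/n$; the remaining $n-k$ coordinates of $p^{\star}$ carry a total excess mass of $1-(n-k)/n=k/n$ relative to the uniform distribution, which forces an additional $k/n$ contribution to $\|p^{\star}-u\|_1$ via the elementary inequality $T_{+}+T_{-}\geq |T_{+}-T_{-}|$ applied to positive and negative parts of $p^{\star}-u$ restricted to the support of $p^{\star}$. Dividing by $2$ yields $\dtv(p^{\star},u)\geq k/n=1-(S^2-S+1)/n$, which exceeds $\veps$ whenever $S^2-S+1<n(1-\veps)$, and this is satisfied as soon as $S\leq c(\veps)\sqrt{n}$ for a sufficiently small constant $c(\veps)>0$, giving the claimed $S=O(\sqrt{n})$.

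The only delicate step is the rank computation: it is the identification of the $S-1$ dependencies among the $S^2$ row-sum equations that keeps the effective codimension at $S^2-S+1$ rather than a naive $S^2$, and hence lets the vertex $p^{\star}$ still be forced to zero on an $\Omega(n)$ fraction of coordinates. Everything else is a textbook sparsity-of-vertices fact from linear programming combined with the elementary TV lower bound above.
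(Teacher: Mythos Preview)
Your proof is correct and reaches the same conclusion as the paper, but via a somewhat different and arguably cleaner route. The paper vectorizes the transition matrix as $Au$ for an $S^2\times n$ matrix $A$, passes to the subspace $V=\ker(A)\cap\ker(\mathbf{1}^T)$ of dimension $k\geq n-S^2-1$, and then invokes a separate geometric lemma (Lemma~\ref{lem:ptoq}) asserting that any $k$-dimensional subspace of $\mathbb{R}^n$ contains a vector $x$ with $\lVert x\rVert_1/\lVert x\rVert_\infty\geq k$; that lemma is itself proved by a vertex argument on $V\cap[-1,1]^n$, after which one rescales and sets $q=u-x$. You instead run the vertex argument directly on the feasibility polytope $\{p\geq 0:\Prob(p)=\Prob(u)\}$, which sidesteps the detour through the kernel and the auxiliary norm lemma. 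Your more careful rank count (exploiting the $S-1$ row-sum dependencies to get rank at most $S^2-S+1$ rather than the paper's $S^2$) and the one-sidedness of the nonnegativity constraints (which pins all tight coordinates to $0$, so the balance condition $\sum_i(p^\star_i-u_i)=0$ doubles the $\ell^1$ mass) together yield the sharper bound $\dtv(p^\star,u)\geq k/n$ versus the paper's $k/(2n)$, though of course this only affects constants. One minor caveat: your setup $A_{s,s'}=\{i:f(s,i)=s'\}$ presumes a deterministic machine, whereas the paper's proof is stated for randomized transitions as well; your argument extends verbatim to that case since the row-sum identity $\sum_{s'}\Prob(p)_{s,s'}=\sum_i p_i$ holds regardless of whether the entries of $A$ are $\{0,1\}$-valued or probabilities.
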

\begin{proof}
Fix a finite state machine with $S$ states, and define the matrix $A\in[0,1]^{S^2 \times n}$ whose rows are indexed by $(i,j)\in [S]\times [S]$. The entry $A_{(i,j),m}$ specifies the probability of moving to state $j$ from state $i$ when the input is $m\in[n]$. Note that for deterministic machines the values of $A$ can only be $0$ or $1$, but our lemma in fact holds for the more general case of randomized machines.
Let $p(u)=A u$, where $u=[1/n \ldots 1/n]^T$, be the vectorized form of the transition matrix induced on the machine by the uniform distribution, that is, $p(u)$ is generated by stacking the columns of $\Prob(u)$ on top of each other.
Now, if $S^2<n$, the matrix $A$ must have a nontrivial kernel of rank at least $n-S^2$. Let $V$ be the subspace spanned by $\ker(A)\cap \ker([1 \ldots 1])$. We have that $\dim(V)\geq n-1-S^2 \triangleq k$. Before we continue, we need the following lemma, due to David E. Speyer.\footnote{\url{https://math.stackexchange.com/questions/1850488}}  
\begin{lemma}\label{lem:ptoq}
Let $V$ be a $k$-dimensional linear subspace of $\mathbb{R}^n$. Define its $p\rightarrow q$ norm as
\begin{align}
    \lVert V \rVert _{p\rightarrow q}= \underset{x\in V \setminus \{0\}}{\max}\frac{\lVert x \rVert_q}{\lVert x \rVert_p}.
\end{align}
Then, for all $q\geq 1$,
\begin{align}
   \lVert V \rVert _{\infty\rightarrow q} \geq \sqrt[q]{k}.
\end{align}
\end{lemma}

\begin{proof}
Let $Q$ be the hypercube $[-1,1]^n$. Then $V\cap Q$ is a bounded nonempty polytope, thus it must have a vertex. Denote that vertex as $x =(x_1,\ldots ,x_n)$. Without loss of generality, we may assume that there is some integer $0<r<n$ such that $x_i=1$ for all $ 1\leq i \leq r$ and $|x_i|<1$ for all $r+1\leq i \leq n$. This is since $Q$ is symmetric, thus we can always rearrange the coordinates of $V$ without affecting its norms.
We claim that $r\geq k$. Assume towards contradiction that it is not the case. Then there must be some nonzero vector $v \in V$ with $v_i=0$ for all $1\leq i \leq r$, such that for a small enough constant $\veps$, both $x+\veps v$  and $x-\veps v$  would be in $V\cap Q$, contradicting that $x$ is a vertex. Thus $\lVert x \rVert _{\infty}=1$ and $\lVert x \rVert _q \geq \sqrt[q]{r} \geq \sqrt[q]{k}$. 
\end{proof}
\textit{Proof of Lemma~\ref{lem:sameP}, continued}: Lemma~\ref{lem:ptoq} implies that there must be a vector $x\in V$ with $\lVert x \rVert_\infty=1/n$ and $\lVert x \rVert_1\geq k/n$.
For this $x$, we can define $q=u-x$. Recalling that $V$ is orthogonal to $[1 \ldots 1]$, we have $\sum q_i=\sum u_i-\sum x_i=1$, thus $q$ is a valid probability distribution and $\dtv{(u,q)}=\frac{1}{2}\lVert x \rVert _1\geq k/2n$. Since $x\in \ker(A)$, we have that the transition probability $q$ induces on the machine is 
\begin{align}
    p(q)=Aq=A(u-x)=A u=p(u).
\end{align}
Thus, for any finite state machine with $S=((1-2\veps)n-1)^{1/2}$ states, there exists a distribution $q$ with $\dtv{(u,q)}\geq \veps$ such that $\Prob(u)=\Prob(q)$. 
\end{proof}
\begin{proof}\textbf{of Theorem~\ref{thm:lower_bound}, irreducible case} :
Recall that the stationary row vector satisfies $\pi_u=\pi_u\Prob(u)$. This can be written as $\pi_u=Mp(u)=MA u$, where $M\in [0,1]^{S\times S^2}$ is of the form
\begin{align}
M=\begin{pmatrix}
\pi_u & 0 & \ldots &0\\
0 & \pi_u & \ldots & 0\\
\vdots & \vdots & \vdots & \vdots\\
0 & 0 & \ldots & \pi_u
\end{pmatrix}    
\end{align}
Denote $B=MA \in [0,1]^{S \times n}$. Similarly to Lemma~\ref{lem:sameP}, we have that if $S<n$, the matrix $B$ must have a nontrivial kernel of rank at least $n-S$. Letting $V_1$ be the subspace spanned by $\ker(B)\cap \ker([1 \ldots 1])$, we have that $\dim(V_1)\geq n-1-S \triangleq k$. Appealing to Lemma~\ref{lem:ptoq}, there is some valid probability distribution $q=u-x$ with $\dtv{(u,q)}\geq k/2n$ and $x\in \ker(B)$. We thus have that  $Bq=Bu=\pi_u$. But $Bq=\pi_u$ are simply the stationary equations for the chain under $q$, thus $\pi_q=\pi_u$.  
This implies that for any finite state machine with $S=(1-2\veps)n-1$ states, there exists a distribution $q$ with $\dtv{(u,q)}\geq \veps$ such that $\pi_u=\pi_q$.  
\end{proof}
\subsection{Proof of the \texorpdfstring{$\Omega(n)$}{Lg} bound for non-ergodic machines}
We now prove the case of  non-ergodic machines, corresponding to reducible Markov chains. Specifically, we show that any $(S,\delta)$ non-ergodic uniformity tester can be reduced to an $(S,\delta +\gamma)$ ergodic uniformity tester, where $\gamma$ can be made arbitrarily small. We start by considering a simple non-ergodic machine with only two absorbing states. 
\begin{lemma}\label{lem:make_ergodic}
Consider a Markov chain with initial state $s$ and two absorbing states $0$ and $1$, such that the absorption probabilities are $p_{\infty}^j=[\Pr(s\rightarrow 0|\m{H}_j), \Pr(s\rightarrow 1|\m{H}_j)]$ under hypothesis $\m{H}_j$. Further let $\mix$ be the maximal expected hitting time of state $i$ under $\m{H}_j$, given that we absorb in $i$, where the maximum is over all $i,j\in\{0,1\}$. Let $M>0$ be some positive constant. Then for any $j\in \{0,1\}$, the ergodic chain that results by connecting each absorbing state to $s$ with probability $\delta= \frac{1}{M\mix}$ has stationary probabilities $\pi_0^j,\pi_1^j$ such that
\begin{align}
  |\pi_i^j-p_{\infty}^j(i)|\leq  \frac{1}{M+1}\cdot \underset{i\in \{0,1\}}{\max}p_{\infty}^j(i) \leq \frac{1}{M+1} .
\end{align}
\end{lemma}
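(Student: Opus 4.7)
The plan is to exploit the natural renewal structure of the modified chain, treating each return to $s$ as a regeneration epoch. Let $T$ denote the absorption time of the \emph{original} chain starting at $s$. By the tower property and the definition of $\mix$,
\begin{align}
\E[T \mid \mathcal{H}_j] \;=\; \sum_{i\in\{0,1\}} p_\infty^j(i)\,\E[T \mid \text{absorb in } i,\mathcal{H}_j] \;\leq\; \mix,
\end{align}
since each conditional expectation is at most $\mix$ and $p_\infty^j(0)+p_\infty^j(1)=1$.

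Next I will describe one cycle of the modified chain between successive visits to $s$. Starting at $s$, after $T$ steps the walker first reaches some absorbing state $i\in\{0,1\}$, distributed according to $p_\infty^j$. From then on, at each time step the walker independently returns to $s$ with probability $\delta$, so the number of consecutive sojourn steps at the absorbing state is $\mathrm{Geometric}(\delta)$ with mean $1/\delta$, and this dwell time is \emph{independent} of which absorbing state was entered (the parameter $\delta$ is the same for $0$ and $1$). Therefore one cycle has expected length $\E[T\mid\mathcal{H}_j]+1/\delta$, and the expected number of visits to state $i\in\{0,1\}$ per cycle factors cleanly as $p_\infty^j(i)\cdot 1/\delta$.

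Applying the standard cycle-ratio formula for the stationary distribution of a positive-recurrent irreducible chain observed at returns to a reference state (positive recurrence is immediate since $\delta>0$ forces a finite expected return time to $s$, and irreducibility holds by construction),
\begin{align}
\pi_i^j \;=\; \frac{p_\infty^j(i)/\delta}{\E[T\mid\mathcal{H}_j]+1/\delta} \;=\; \frac{p_\infty^j(i)}{1+\delta\,\E[T\mid\mathcal{H}_j]}.
\end{align}
Substituting $\delta = 1/(M\mix)$ gives $\delta\,\E[T\mid\mathcal{H}_j]\leq 1/M$, which yields $\tfrac{M}{M+1}\,p_\infty^j(i) \leq \pi_i^j \leq p_\infty^j(i)$, and rearranging this sandwich produces the claimed bound
\begin{align}
|p_\infty^j(i)-\pi_i^j| \;\leq\; \frac{p_\infty^j(i)}{M+1} \;\leq\; \frac{1}{M+1}\cdot\max_{i\in\{0,1\}} p_\infty^j(i).
\end{align}

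The only subtle step is the factorization of ``expected visits per cycle'' as absorption probability times mean geometric dwell time; this requires the geometric sojourn to be independent of the identity of the absorbing state, which holds precisely because the return probability $\delta$ is taken uniform across $\{0,1\}$. Everything else is bookkeeping around the renewal-reward identity and the definition of $\mix$.
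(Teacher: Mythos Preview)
Your proof is correct and takes a genuinely different route from the paper's. The paper applies Kac's return-time identity to the formerly-absorbing state $0$ (and then $1$): it bounds $\E[\tau_0]$ via a geometric ``rounds'' argument---with probability $1-\delta$ return in one step, otherwise go to $s$ and count how many excursions hit state $1$ before one finally hits state $0$---and obtains $\E[\tau_0]\le (1+\delta\mix)/p_\infty(0)$, hence $\pi_0\ge p_\infty(0)\cdot M/(M+1)$; the bound on the other state then follows by complementarity. You instead run renewal--reward at $s$: the cycle decomposes as ``absorption time $T$'' plus an independent $\mathrm{Geometric}(\delta)$ sojourn whose mean does not depend on which state absorbed, giving the exact closed form $\pi_i^j=p_\infty^j(i)/(1+\delta\,\E[T\mid\mathcal H_j])$. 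This is shorter, avoids the geometric-sum bookkeeping, and as a byproduct yields the per-state inequality $|\pi_i^j-p_\infty^j(i)|\le p_\infty^j(i)/(M+1)$, which is slightly sharper than the stated bound with the maximum.

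One phrasing caveat: if the original chain allows the walk to revisit $s$ \emph{before} absorption, then ``successive visits to $s$'' is not the right regeneration; you should take the renewal epochs to be the times at which the chain enters $s$ via the new $\delta$-transition from an absorbing state. With that reading your computation is unchanged (the cycles are still i.i.d., each consisting of an absorption time $T$ followed by the geometric sojourn), so this is only a matter of wording.
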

\begin{proof}
Consider the ergodic Markov chain obtained by connecting each of the absorbing states $0$ and $1$ to $s$ with probability $\delta= \frac{1}{M\mix}$. The transition rules of the transient states are left unchanged.
Assume w.l.o.g that the correct hypothesis is $\m{H}_0$, and for simplicity let $p_{\infty}=p_{\infty}^0,\pi_i=\pi_i^0 $. Appealing to~\cite{kac1947notion} [Theorem $2$ - Kac's return time], the stationary distribution of state $i\in \{0,1\}$ is $\pi_i=\frac{1}{\E[\tau_i]}$, where $\E[\tau_i]$ is the expected first return time to state $i$. The expected return time to state $0$ (under $\m{H}_0$) can be upper bounded by the following argument: with probability $1-\delta$ we return in one step, otherwise we play the following game. We flip a $\Bern(1-p_{\infty}(0))$ coin that determines whether we end in  state $0$ or not. If the coin value is $1$, we travel the chain at most $\mix$ time units on average and then another $\delta^{-1}$ time units on average 'stuck' in state $1$ before returning to state $s$ and flipping the coin again. If the coin value is $0$, we arrive at the desired state and terminate the game after at most $\mix$ time units on average. This is since the limiting distribution $p_{\infty}(0)$ is simply the probability of arriving at state $0$ before arriving at state $1$. 

Formally, let $W$ denote the number of rounds in the above game, i.e., the number of times we return to $s$ before hitting state $0$. Thus $\Pr(W=m)=(1-p_{\infty}(0))^{m-1}p_{\infty}(0)$, and furthermore, $\E[\tau_0|W=m]\leq \mix+m\mix+m\delta^{-1}$. We then have
\begin{align}
   \E[\tau_0]&=(1-\delta)+\delta \cdot \sum_{m=0}^\infty \Pr(W=m) \E[\tau_0|W=m]\\&= (1-\delta)+\delta \cdot p_{\infty}(0) \sum_{m=0}^{\infty}(1-p_{\infty}(0))^m\left(\mix+m\mix+m\delta^{-1}\right)\\&=(1-\delta)+\delta \cdot p_{\infty}(0)\left( \frac{\mix}{p_{\infty}(0)} +\left(\mix+\delta^{-1}\right)\sum_{m=0}^{\infty}m(1-p_{\infty}(0))^m\right)\\&= (1-\delta)+\delta \mix +\left(\delta \mix+1\right)\cdot \frac{1-p_{\infty}(0)}{p_{\infty}(0)}\\&\leq 1+\frac{\delta \mix+1-p_{\infty}(0)}{p_{\infty}(0)}=\frac{1+\delta \mix}{p_{\infty}(0)}.
\end{align}
Substituting $\delta= \frac{1}{M\mix}$ we have $p_{\infty}(0)-\pi_0\leq p_{\infty}(0) \cdot \frac{1}{M+1}$, which implies $\pi_1-p_{\infty}(1)\leq p_{\infty}(0) \cdot \frac{1}{M+1}$. Applying the above argument to $\E[\tau_1]$ concludes the proof.
\end{proof}
\begin{proof}\textbf{of Theorem~\ref{thm:lower_bound}, reducible case}:
Consider a Markov chain comprised of $K$ recurrent classes $\m{R}_1,\ldots,\m{R}_{K}$, and a set $\m{T}$ of transient states with initial state $s$. Assume w.l.o.g that $s\in\mathcal{T}$, as otherwise the machine is in fact irreducible.  Let $\mix$ be the maximum of the sum of the expected hitting time of class $\m{R}_k$ and the mixing time of the class under $\m{H}_j$, given that we absorb in $\m{R}_k$, where the maximum is over all $k\in [K],j\in \{0,1\}$. Motivated by the construction of Lemma~\ref{lem:make_ergodic} we connect all recurrent states (that is, all states not in $\m{T}$) to $s$ with probability $\delta=\frac{1}{M\mix}$, and with probability $1-\delta$ leave their transition rules unchanged, thus generating an ergodic Markov chain. Assume w.l.o.g that the correct hypothesis is $\m{H}_0$, and to simplify notations let $\Pe=\Pe(f,d|\m{H}_0)$ and $\Pr(s\rightarrow \m{R}_k)=\Pr(s\rightarrow \m{R}_k|\m{H}_0)$ for all $k\in [K]$. For each $k$, consider a chain with $|\mathcal{T}|+2$ states, obtained from the original chain by merging the states in $\m{R}_k$ and $\{\m{R}_j\}_{j\neq k}$ into two respectively absorbing states. By applying Lemma~\ref{lem:make_ergodic}, the stationary probability of the class $\m{R}_k$ for all $k\in [K]$, has
\begin{align}
  |\pi(\m{R}_k)-\Pr(s\rightarrow \m{R}_k)|\leq \frac{1}{M+1},  \label{eq:lem_bound}
\end{align}
where $\pi(\m{R}_k)$ is the stationary probability of the class $\m{R}_k$. We now bound the stationary probability of state $i \in \m{R}_k$ given class $\m{R}_k$, for all $k\in [K]$. To that end, let $\tilde{\pi}_i= \frac{\pi_i}{\pi(\m{R}_k)}$ for $i \in \m{R}_k$, and let $\pi_i^{\infty}$ be the limiting distribution of state $i$ in the original chain, given that we absorb in $\m{R}_k$. We show the following lemma, whose proof is relegated to the appendix.
\begin{lemma}\label{lem:new_pi}
For any $k\in [K]$ and $i \in \m{R}_k$,
\begin{align}
 \tilde{\pi}_i\geq \pi_i^{\infty}-\frac{|\m{R}_k|}{M}\left(1+\frac{1}{\pi(\m{R}_k)}\right).
\end{align}
\end{lemma}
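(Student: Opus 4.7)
My plan is to analyze the conditional stationary distribution $\tilde\pi$ inside $\m{R}_k$ by writing down the balance equations for the modified chain, and then comparing them to the corresponding equations for $\pi^\infty$ under the unmodified class dynamics. Because the $\delta$-kicks were added only to recurrent states, for any $i \in \m{R}_k$ the stationary equation reads
\begin{align*}
\pi_i = (1-\delta)\sum_{j \in \m{R}_k} \pi_j P_{\text{orig}}(j,i) + \sum_{j \in \m{T}} \pi_j P_{\text{orig}}(j,i),
\end{align*}
so, dividing by $\pi(\m{R}_k)$ and defining $c_i \triangleq \pi(\m{R}_k)^{-1}\sum_{j \in \m{T}} \pi_j P_{\text{orig}}(j,i)$, I get $\tilde\pi = (1-\delta)\tilde\pi P_k + c$, with $\sum_i c_i = \delta$, where $P_k$ is the original kernel restricted to $\m{R}_k$ (for which $\pi^\infty$ is the unique stationary distribution).

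Next I would invert this balance equation into the power-series expansion $\tilde\pi_i = \delta\sum_{t \geq 0} (1-\delta)^t f^\mu_i(t)$, where $\mu \triangleq c/\delta$ is the probability distribution on $\m{R}_k$ describing where re-entry into the class occurs, and $f^\mu_i(t) = (\mu P_k^t)_i$ is the probability that the class dynamics, started from $\mu$, occupy state $i$ at time $t$. The key observation is that if $\mu$ were equal to $\pi^\infty$ then $f^\mu_i(t) = \pi^\infty_i$ for every $t$ and the sum would telescope to exactly $\pi^\infty_i$, so the entire discrepancy $\tilde\pi_i - \pi^\infty_i$ is driven by the deviation of $\mu P_k^t$ from $\pi^\infty$.

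To extract the claimed lower bound I would split the series at $t=\mix$. For the pre-mixing block $t<\mix$, the trivial bound $f^\mu_i(t) \geq 0$ shows that dropping those terms costs at most $\delta \mix = 1/M$ per state, which after the crude redistribution across all of $\m{R}_k$ and after normalization by $\pi(\m{R}_k)$ in the definition of $c$ produces the $|\m{R}_k|/(M\pi(\m{R}_k))$ contribution. For the post-mixing tail $t \geq \mix$, the definition of $\mix$ as the hitting time of $\m{R}_k$ plus its mixing time guarantees that $f^\mu_i(t)$ lies within a small error of $\pi^\infty_i$, while the geometric tail satisfies $\sum_{t \geq \mix}\delta(1-\delta)^t=(1-\delta)^\mix \geq 1-\delta\mix = 1-1/M$, so this tail recovers essentially all of $\pi^\infty_i$ up to an additive $|\m{R}_k|/M$ loss. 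Assembling the two pieces yields $\tilde\pi_i \geq \pi^\infty_i - \frac{|\m{R}_k|}{M}\bigl(1+\tfrac{1}{\pi(\m{R}_k)}\bigr)$.

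The main obstacle is the bookkeeping needed to combine the two error sources, the pre-mixing occupancy (bounded crudely by $|\m{R}_k|$) and the re-entry flux normalization (scaled by $1/\pi(\m{R}_k)$), into exactly the stated additive form, rather than producing a sharper $O(1/M)$ estimate that would require fine-grained mixing time inequalities. The right compromise is to rely only on the nonnegativity of $f^\mu_i(t)$ and on the elementary inequality $(1-\delta)^\mix \geq 1-1/M$; this keeps the argument robust to the full generality of the reducible setting while still giving a bound that vanishes as $M \to \infty$, so that the overall reduction from the non-ergodic to the ergodic case can afterward choose $M$ large enough to drive the extra error $\gamma$ below any prescribed level.
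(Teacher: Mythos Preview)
Your route is genuinely different from the paper's. The paper does not expand $\tilde\pi$ as a geometric time series; instead it sets $\boldsymbol{d}=\tilde{\boldsymbol{\pi}}-\boldsymbol{\pi}^\infty$, observes from the two balance equations that $\boldsymbol{d}(\boldsymbol{I}-P_k)=\boldsymbol{e}$ with $e_j=c_j-\delta(\tilde\pi P_k)_j$, and then controls $\|\boldsymbol{d}\|_\infty$ via the spectral gap: $\|\boldsymbol{d}\|_2(1-\lambda_{\max}(P_k))\le\|\boldsymbol{e}\|_2\le |\m{R}_k|\,\|\boldsymbol{e}\|_\infty$, a flux-balance lemma giving $\|\boldsymbol{e}\|_\infty\le\delta(1+1/\pi(\m{R}_k))$, and finally the relaxation--mixing inequality $1/(1-\lambda_{\max})\le T_{\text{mix}}^k+1\le\mix$. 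That is where the factors $|\m{R}_k|$ (from the crude $\ell_2\!\to\!\ell_\infty$ norm comparison) and $1/\pi(\m{R}_k)$ (from bounding the transient inflow term in $\boldsymbol{e}$) actually originate.

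Your power-series identity $\tilde\pi_i=\delta\sum_{t\ge0}(1-\delta)^t f_i^\mu(t)$ is correct, and flux balance across $\partial\m{R}_k$ does give $\sum_i c_i=\delta$, so $\mu=c/\delta$ is a bona fide re-entry distribution. The gap is in the tail step. With the standard definition of mixing time, ``$t\ge\mix$'' only guarantees $|f_i^\mu(t)-\pi_i^\infty|\le$ a fixed constant (e.g.\ $1/4$), not an error that scales with $1/M$; hence the tail yields at best $(1-\delta)^{\mix}(\pi_i^\infty-\Theta(1))$, and that $\Theta(1)$ loss dominates the $|\m{R}_k|/M$ you are aiming for. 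The argument can be repaired by abandoning the split and instead summing the submultiplicative decay $d_{\mathsf{TV}}(\mu P_k^t,\pi^\infty)\le 2^{-\lfloor t/T_{\text{mix}}^k\rfloor}$ over all $t$, which gives $|\tilde\pi_i-\pi_i^\infty|=O(\delta\,T_{\text{mix}}^k)=O(1/M)$ directly (in fact sharper than the stated lemma). As written, though, the appearance of $|\m{R}_k|$ and $1/\pi(\m{R}_k)$ in your sketch is not derived from your own estimates; it reads as reverse-engineering the target bound rather than as output of the time-splitting analysis.
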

Let $\Pr(\m{C})= \sum _{k=1}^K \Pr(s\rightarrow \m{R}_k)\Pr(\m{C}|\m{R}_k)$ be the asymptotic probability of success in the original chain, and let $\Pr(\m{C}^{\text{erg}})$ be the asymptotic probability of success in the new ergodic chain. Then
\begin{align}
    \Pr(\m{C}^{\text{erg}})&\geq \sum _{k=1}^K \pi(\m{R}_k)\Pr(\m{C}^{\text{erg}}|\m{R}_k)\\&\geq \sum _{k=1}^K\Pr(s\rightarrow \m{R}_k)\Pr(\m{C}^{\text{erg}}|\m{R}_k)-\sum _{k=1}^K\frac{1}{M+1}\label{eq:class_bound}\\&\geq \Pr(\m{C})-\sum _{k=1}^K\frac{|\m{R}_k|^2}{M}\left(1+\frac{1}{\pi(\m{R}_k)}\right)-\frac{K}{M+1}\label{eq:lemma9}
    \\&\geq\Pr(\m{C})-\frac{S^2\left(1+1/\pi(\m{R}_k)\right)+K}{M}\label{eq:sum_K},
\end{align}
where~\eqref{eq:class_bound} follows from~\eqref{eq:lem_bound},~\eqref{eq:lemma9} follows from $\Pr(\m{C}^{\text{erg}}|\m{R}_k)=\sum_{i:d(i)=\m{H}_0}\tilde{\pi}_i$ and by substituting Lemma~\ref{lem:new_pi}, and~\eqref{eq:sum_K} follows from $\sum _{k=1}^K|\m{R}_k|^2\leq S^2$. 
We conclude by recalling that Theorem~\ref{thm:lower_bound} for the irreducible case implies that any irreducible machine with probability of error smaller than $1/2$ must have $\Omega(n)$ states, therefore, by choosing $M$ to be sufficiently large, the bound 
\begin{align}
  \Pe ^{\text{erg}}\leq \Pe+\frac{S^2(1+1/\pi(\m{R}_k))+K}{M}  
\end{align}
implies that so should any reducible machine. 
\end{proof}

\acks{This work was supported by the ISF under Grants 1641/21 and 1495/18.}

\bibliography{uniformity}

\appendix
\section{Proof of Lemma~\ref{lem:biased}}
\begin{proof}
Assume toward contradiction that, for all $i$,
\begin{align}
    p_1^{(i,1)}&\leq \frac{1}{2}+\tilde{\veps} \Longrightarrow p_1\leq \frac{1+2\tilde{\veps}}{1-2\tilde{\veps}}\cdot p_i,\label{eq:p1_bound}\\
    p_i^{(i,1)}&\leq \frac{1}{2}+\tilde{\veps} \Longrightarrow p_i\leq\frac{1+2\tilde{\veps}}{1-2\tilde{\veps}}\cdot p_1.\label{eq:pi_bound}
\end{align}
~\eqref{eq:p1_bound} and~\eqref{eq:pi_bound} imply that
\begin{align}
 1&=\sum_{i=1}^n p_i\leq np_1 \cdot \frac{1+2\tilde{\veps}}{1-2\tilde{\veps}} ,\label{eq:upperb}\\
 1&=\sum_{i=1}^n p_i \geq np_1 \cdot \frac{1-2\tilde{\veps}}{1+2\tilde{\veps}}.\label{eq:lowerb}
\end{align}
Thus we have
\begin{align}
    \frac{1-2\tilde{\veps}}{1+2\tilde{\veps}}\cdot \frac{1}{n}\leq p_1\leq\frac{1+2\tilde{\veps}}{1-2\tilde{\veps}}\cdot \frac{1}{n},
\end{align}
and, as a result, for all $i>1$,
\begin{align}
    \left(\frac{1-2\tilde{\veps}}{1+2\tilde{\veps}}\right)^2\cdot \frac{1}{n}\leq p_i\leq \left(\frac{1+2\tilde{\veps}}{1-2\tilde{\veps}}\right)^2\cdot \frac{1}{n}.
\end{align}
Finally, we have
\begin{align}
    \dtv{(p,u)}=\sum_{i=1}^n \left|p_i-\frac{1}{n}\right|\leq \max \left\{\frac{8\tilde{\veps}}{(1-2\tilde{\veps})^2},\frac{8\tilde{\veps}}{(1+2\tilde{\veps})^2}\right\}=
    \frac{8\tilde{\veps}}{(1-2\tilde{\veps})^2}, 
\end{align}
and by substituting $\tilde{\veps}=\frac{\veps}{8+4\veps}$ we arrive at $\dtv{(p,u)}\leq \veps$, a contradiction. Thus, there must be some $i$ for which either $p_1^{(i,1)}$ or $p_i^{(i,1)}$ are greater than $\frac{1}{2}+\tilde{\veps}$.
\end{proof}

\section{Proof of Lemma~\ref{lem:new_pi}}
\begin{proof}
Assume w.l.o.g that $\m{R}_k=[m]$.
Let $\pi_i$ be the stationary distribution of state $i$ in the ergodic chain, and for all $i \in \m{R}_k$, let $\pi_i^{\infty}$ be the limiting distribution of state $i$ in the original chain, given that we absorb in $\m{R}_k$. Furthermore, let $p_{ij}$ be the transition probability from state $i$ to state $j$ in the original chain. The stationary equations for $\pi_j^{\infty}$ and $\pi_j$, for any $j\in \m{R}_k$, are
\begin{align}
  \pi_j^{\infty}&=\sum_{i\in \m{R}_k} \pi_i^{\infty}p_{ij}, \\ \pi_j&=\sum_{i\in \m{R}_k} \pi_i p_{ij}(1-\delta)+\sum _{i\in \m{T}}\pi_i p_{ij}.
\end{align}
Thus we have
\begin{align}
  \tilde{\pi}_j&=\sum_{i\in \m{R}_k} \tilde{\pi}_i p_{ij}(1-\delta)+\frac{1}{\pi (\m{R}_k)}\sum _{i\in \m{T}}\pi_i p_{ij}\\&=\sum_{i\in \m{R}_k} \tilde{\pi}_i p_{ij}+\underbrace{\frac{1}{\pi (\m{R}_k)}\sum _{i\in \m{T}}\pi_i p_{ij}-\delta\sum_{i\in \m{R}_k} \tilde{\pi}_i p_{ij}}_{e_j}.
\end{align}
Denote $\boldsymbol{\tilde{\pi}}=[\tilde{\pi}_1,\ldots,\tilde{\pi}_m],\boldsymbol{\pi}^{\infty}=[\pi_1^{\infty},\ldots,\pi_m^{\infty}],\boldsymbol{e}=[e_1,\ldots,e_m]$. Further Let $\Prob$ denote the transition matrix of the original chain restricted to $\m{R}_k$, and define $\boldsymbol{d} \triangleq \boldsymbol{\tilde{\pi}}-\boldsymbol{\pi}^{\infty}$. Note that since both $\boldsymbol{\pi}^{\infty}$ and $\boldsymbol{\tilde{\pi}}$ are valid probability distributions over $[m]$, we have that $\boldsymbol{d} \perp \boldsymbol{1}$. Writing the above equations in matrix form, we have
\begin{align}
  &\boldsymbol{d} =\boldsymbol{d}\Prob+\boldsymbol{e},\\&\boldsymbol{d}(\boldsymbol{I}-\Prob)=\boldsymbol{e}.
\end{align}
Let $\lambda_{\min}(\boldsymbol{I}-\Prob)$ be the minimal eigenvalue $\lambda\neq 0$ of $\boldsymbol{I}-\Prob$, and $\lambda_{\max}(\Prob)$ be the maximal eigenvalue $\lambda\neq 1$ of $\Prob$. Thus,
\begin{align}
    \lVert \boldsymbol{e} \rVert_2 = \lVert \boldsymbol{d}(\boldsymbol{I}-\Prob) \rVert_2 &\geq \lVert \boldsymbol{d}\rVert_2 \cdot \lambda_{\min}(\boldsymbol{I}-\Prob)\label{eq:lambda_bound}\\&=\lVert \boldsymbol{d}\rVert_2 (1-\lambda_{\max}(\Prob))\label{eq:spectral_gap},
\end{align}
where~\eqref{eq:lambda_bound} follows since for any matrix $\boldsymbol{A}$ operating on vector space $V$, its minimal eigenvalue has $\lambda_{\min}=\underset{\boldsymbol{v}\in V}{\min}\frac{\lVert \boldsymbol{A}\boldsymbol{v} \rVert_2}{\lVert \boldsymbol{v} \rVert_2}$, and~\eqref{eq:spectral_gap} follows since the eigenvalues of $\boldsymbol{I}-\Prob$ are $1-\lambda_i(\Prob)$ (recall that all the eigenvalues of a Markov transition matrix has $|\lambda_i|\leq 1$). The reason we dispense with $\lambda=0$, which is indeed an eigenvalue of $\boldsymbol{I}-\Prob$ with eigenvector $\boldsymbol{1}$, is that $\boldsymbol{d} \perp \boldsymbol{1}$, hence the largest contraction it can undergo is by $\lambda_{\min}(\boldsymbol{I}-\Prob)$. On the other hand,
\begin{align}
    \lVert \boldsymbol{e} \rVert_2 \leq m \cdot \lVert \boldsymbol{e} \rVert_{\infty}\leq m\cdot \delta\left(1+\frac{1}{\pi (\m{R}_k)}\right),\label{eq:norm_bound}
\end{align}
where the first inequality is a standard norm inequality, and the second is a result of the following lemma, which we introduce without the (simple) proof:
\begin{lemma}\label{lem:simple}
Let $\{X_n\}$ be a stationary process over some alphabet $\mathcal{S}$. Then for any disjoint partition $\mathcal{C}\cup \mathcal{C}'=\mathcal{S}$, it holds that
\begin{align}
\Pr(X_n\in\mathcal{C},X_{n+1}\in\mathcal{C}')=\Pr(X_n\in\mathcal{C}',X_{n+1}\in\mathcal{C}).
\end{align}
\end{lemma}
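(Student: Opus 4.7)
The plan is to exploit the fact that stationarity forces the marginal distribution of $X_n$ to equal that of $X_{n+1}$, and then to realize the claim as a conservation-of-flow identity for the set $\mathcal{C}$. First, I would decompose the marginal $\Pr(X_n \in \mathcal{C})$ by conditioning on whether $X_{n+1}$ falls in $\mathcal{C}$ or in its complement $\mathcal{C}'$, obtaining
$$\Pr(X_n \in \mathcal{C}) = \Pr(X_n \in \mathcal{C},\, X_{n+1} \in \mathcal{C}) + \Pr(X_n \in \mathcal{C},\, X_{n+1} \in \mathcal{C}').$$
Symmetrically, I would decompose $\Pr(X_{n+1} \in \mathcal{C})$ according to where $X_n$ lies:
$$\Pr(X_{n+1} \in \mathcal{C}) = \Pr(X_n \in \mathcal{C},\, X_{n+1} \in \mathcal{C}) + \Pr(X_n \in \mathcal{C}',\, X_{n+1} \in \mathcal{C}).$$

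The second and final step is to invoke stationarity of $\{X_n\}$, which in particular implies that $\Pr(X_n \in \mathcal{C}) = \Pr(X_{n+1} \in \mathcal{C})$. Equating the two decompositions and cancelling the common joint term $\Pr(X_n \in \mathcal{C},\, X_{n+1} \in \mathcal{C})$ yields precisely the desired identity. Conceptually, this is just the statement that in steady state the probability flow out of $\mathcal{C}$ must equal the probability flow back into $\mathcal{C}$; otherwise the marginal mass $\Pr(X_n \in \mathcal{C})$ would drift with $n$, contradicting stationarity. There is really no obstacle here — the proof is a two-line bookkeeping argument (in Markov-chain terms, a detailed-balance-style global-balance equation for the two-block partition $\{\mathcal{C},\mathcal{C}'\}$), which is presumably why the authors justifiably skip it as a simple step.
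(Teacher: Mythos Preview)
Your argument is correct: decomposing $\Pr(X_n\in\mathcal{C})$ and $\Pr(X_{n+1}\in\mathcal{C})$ over the partition $\{\mathcal{C},\mathcal{C}'\}$ and invoking stationarity of the one-dimensional marginals is exactly the standard two-line proof. The paper itself omits the proof entirely (stating the lemma ``without the (simple) proof''), so there is nothing to compare against; your write-up fills that gap cleanly.
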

Appealing to the lemma with $\mathcal{C}=\m{T},\mathcal{C}'=\bigcup\limits_{k=1}^K \m{R}_k$, we have
\begin{align}
    \sum_{k=1}^K\sum_{j\in \m{R}_k}\sum_{i\in \m{T}}\pi_i p_{ij}=\sum_{k=1}^K\sum_{i\in \m{R}_k}\sum_{j\in \m{T}}\pi_i p_{ij}=\sum_{k=1}^K\sum_{i\in \m{R}_k}\pi_i \cdot \delta \leq \delta,
\end{align}
since the only transition from any $i\in \m{R}_k$ to $\m{T}$ is transition to $s$ with probability $\delta$. Thus we have, for any $j\in [K]$, 
\begin{align}
    |e_j|&\leq \frac{1}{\pi (\m{R}_k)}\sum _{i\in \m{T}}\pi_i p_{ij}+\delta\sum_{i\in \m{R}_k} \tilde{\pi}_i p_{ij}\\&\leq \frac{1}{\pi (\m{R}_k)}\sum_{k=1}^K\sum_{j\in \m{R}_k}\sum_{i\in \m{T}}\pi_i p_{ij}+\delta \\&\leq \frac{\delta}{\pi (\m{R}_k)}+\delta,
\end{align}
and~\eqref{eq:norm_bound} is established. Combining~\eqref{eq:spectral_gap} and~\eqref{eq:norm_bound}, and recalling that $\lVert \boldsymbol{d}\rVert_2\geq \lVert \boldsymbol{d}\rVert_{\infty}$, we conclude
\begin{align}
 \lVert \boldsymbol{d}\rVert_{\infty}&\leq \frac{m\cdot\delta}{1-\lambda_{\max}(\Prob)} \left(1+\frac{1}{\pi (\m{R}_k)}\right)\\&\leq m\cdot\delta (T_{\text{mix}}^k+1) \left(1+\frac{1}{\pi (\m{R}_k)}\right)\label{eq:mixing} \\&\leq \frac{m}{M}\left(1+\frac{1}{\pi (\m{R}_k)}\right),\label{eq:delta}
\end{align}
where $T_{\text{mix}}^k$ is the mixing time of the Markov chain with transition matrix $\Prob$,~\eqref{eq:mixing} follows from~\cite{levin2017markov}[Theorem 12.5], and~\eqref{eq:delta} follows from substituting $\delta=\frac{1}{M\mix}$. 
\end{proof}
\end{document}